\theoremstyle{plain}
\newtheorem{theorem}{Theorem}[section]
\newtheorem{lemma}[theorem]{Lemma}
\newtheorem{proposition}[theorem]{Proposition}
\newtheorem{remark}[theorem]{Remark}
\newtheorem*{acknowledgement*}{Acknowledgement}
\numberwithin{equation}{section}
\newcommand{\Fourier}{\mathcal{F}}
\newcommand{\N}{\mathbb{N}}
\newcommand{\R}{\mathbb{R}}
\newcommand{\CC}{\mathbb{C}}
\newcommand{\D}{\mathbb{D}}
\newcommand{\vol}{\operatorname{vol}}
\begin{document}
\title[A fractal uncertainty principle for Bergman spaces and analytic wavelets]
{A fractal uncertainty principle\\ for Bergman spaces and analytic wavelets}
\author{Luis Daniel Abreu}
\email{abreuluisdaniel@gmail.com}
\address{Faculty of Mathematics, University of Vienna,
Oskar-Morgenstern-Platz 1, A-1090, Vienna, Austria}

\author{Zouhair Mouayn}
\email{mouayn@gmail.com}
\address{Department of Mathematics, Faculty of Sciences and Technics (M'Ghila),
Sultan Moulay Slimane University, P.O. Box. 523, B\'{e}ni Mellal,
Morocco\\
Acoustics Research Institute, Wohllebengasse 12-14, A-1040, Vienna, Austria}

\author{Felix Voigtlaender}
\email{felix.voigtlaender@ku.de}
\address{Katholische Universität Eichstätt--Ingolstadt,
Lehrstuhl Reliable Machine Learning,
Ostenstraße 26,
85072 Eichstätt,
Germany}

\thanks{The authors were supported by the Austrian Science Fund (FWF) via
the project (P31225-N32).}

\date{\today}

\keywords{Uncertainty principle, Bergman spaces, wavelets, Cantor set}
\subjclass[2020]{30H20, 47B35, 42C40, 47A30, 47A75}

%Explanation of subject classes:
%30H20: Bergman spaces and Fock spaces
%47B35: Toeplitz operators, Hankel operators, Wiener-Hopf operators
%47A30: Norms (inequalities, more than one norm, etc.) of linear operators
%47A75: Eigenvalue problems for linear operators
%42C40: Nontrigonometric harmonic analysis involving wavelets and other special systems

\maketitle

\begin{abstract}
Motivated by results of Dyatlov on Fourier uncertainty principles for Cantor sets
and by similar results of Knutsen for joint time-frequency representations
(i.e., the short-time Fourier transform (STFT) with a Gaussian window, equivalent to Fock spaces),
we suggest a general setting relating localization and uncertainty and prove,
within this context, an uncertainty principle for Cantor sets in Bergman spaces on the unit disk,
where the Cantor set is defined as a union of annuli that are equidistributed in the hyperbolic measure.
The result can be written in terms of analytic Cauchy wavelets.
As in the case of the STFT considered by Knutsen, our result consists of a two-sided
bound for the norm of a localization operator involving the fractal
dimension $\log 2/ \log 3$ in the exponent.
As in the STFT case and in Dyatlov's fractal uncertainty principle,
the (hyperbolic) measure of the dilated iterates of the Cantor set in the disk tends to infinity,
while the corresponding norm of the localization operator tends to zero.
\end{abstract}

\section{Introduction}

\subsection{Fractal uncertainty principles for the Fourier transform}

The uncertainty principle is a collection of statements in harmonic
analysis, each of them quantifying in some form the fundamental duality
between a function $f$ and its Fourier transform $\Fourier f$, which
prevents both representations from being ``simultaneously concentrated in small sets'' \cite{RT}.
Let us consider the Fourier transform $\Fourier : L^{2}(\R) \to L^{2}(\R)$ given by
\[
  \Fourier f(\xi)
  = \widehat{f} (\xi)
  = (2\pi)^{-1/2} \int_{\R} e^{-i x \xi} f(x) dx
  \quad \text{for} \quad
  f \in L^1(\R) \cap L^2(\R) ,
\]
and the $h$-dilated Fourier transform $\Fourier_{h} : L^{2}(\R) \to L^{2}(\R)$
\[
  \Fourier_{h} f(\xi)
  = h^{-1/2} \cdot \Fourier f (h^{-1} \xi)
  .
\]
Several mathematical manifestations of the uncertainty principle consist of
bounds on the norm of the operator which concentrates the energy of $f$ in
a set $X$ and the energy of $\mathcal{F}f$\ in a set $Y$.
Following Dyatlov's definition in \cite{Dyatlov}, one can resort to the $h$-dilated
Fourier transform, and declare a pair of real $h$-dependent sets $X \subset \R$ and $Y \subset \R$
to satisfy an uncertainty principle with exponent $\beta > 0$ if, as $h \to 0$,
\begin{equation}
  \big\Vert 1_{X} \Fourier_{h} 1_{Y} \big\Vert_{op}
  = O(h^{\beta})
  \rightarrow 0
  ,
\label{bound}
\end{equation}
where the operator $T = 1_X \Fourier_h 1_Y$ acts via $T f = 1_X \cdot \Fourier_h [1_Y \, f]$
with $1_X$ denoting the indicator function of the set $X$.
For instance, if $X,Y = [0,h]$ then the Hölder inequality,
combined with the estimate $\| \Fourier_h f \|_{L^\infty} \leq h^{-1/2} \, \| f \|_{L^1}$
gives the uncertainty principle
\begin{equation}
  \big\Vert 1_{X}\mathcal{F}_{h}1_{Y} \big\Vert_{op}
  = O(h^{\frac{1}{2}})
  \rightarrow 0
  .
  \label{first}
\end{equation}
Given $R > 0$, set $h=R^{-2}$.
Then
\begin{equation}
  \big\Vert 1_{X}\mathcal{F}_{h}1_{Y} \big\Vert_{op}
  = \left\Vert 1_{X/\sqrt{h}} \Fourier 1_{Y/\sqrt{h}}\right\Vert_{op}
  = \left\Vert 1_{RX} \Fourier 1_{RY} \right\Vert_{op}
  .
  \label{eq:FourierUncertaintyRescaling}
\end{equation}
Now, \eqref{first} becomes an uncertainty principle for the Fourier
transform and the dilated sets $RX$ and $RY$: as $R\rightarrow \infty $,
\[
  \big\Vert 1_{RX} \mathcal{F} 1_{RY} \big\Vert_{op}
  = O(R^{-1})
  \to 0
  .
\]
Moreover, as $R \rightarrow \infty $, \eqref{bound} becomes
\[
  \left\Vert 1_{X}\mathcal{F}_{h}1_{Y}\right\Vert _{op}
  = \left\Vert 1_{X/\sqrt{h }}\mathcal{F}1_{Y/\sqrt{h}}\right\Vert _{op}
  = \left\Vert 1_{RX}\mathcal{F} 1_{RY}\right\Vert _{op}
  = O(R^{-2\beta })
  \to 0
  .
\]
We want to emphasize the following aspect of \emph{the fractal uncertainty principle}
\cite{Dyatlov,DB,FUP}: it covers situations where $RX$
and $RY$ are close to having a fractal structure (they depend on $R$ and
approach fractals when $R \rightarrow \infty$), where their volume approaches
$\infty$ as $R \to \infty $, but nevertheless $RX$ and $RY$
satisfy an uncertainty principle (their operator norm decays like $O(R^{-\beta })$
for some $\beta > 0$).
This can roughly be described by saying that
\emph{no function can be localized close to a fractal set in both time and frequency}.

For illustration and motivation, consider the dilated real Cantor set
$C(R; \mathbb{R})\subset [0,R]$ defined as
\[
  C(R;\mathbb{R})
  = \bigcap_{n=1}^{\infty}
      C_{n}(R;\mathbb{R})
  = \bigcap_{n=1}^{\infty}
      RC_{n}
  ,
\]
where $C_{n}$ is the $n$-th step in the iterative construction of the Cantor set,
where $C_{n+1}$ is obtained from $C_n$ by noting that $C_n$ is a finite union of intervals,
from each of which one removes the middle third to obtain $C_{n+1}$;
for instance, $C_0 = [0,1]$, $C_1 = [0,\frac{1}{3}] \cup [\frac{2}{3}, 1]$, etc.
The set $C_{n}(R; \mathbb{R})$ thus consists of $2^{n}$ disjoint intervals $I_{n,j} \subset [0,R]$,
each with measure $\vol(I_{n,j}) = 3^{-n}R$.
Taking $n$ such that $\vol(I_{n,j})\asymp 1/R$, then $3^{n/2}\asymp R$ and
\[
  \vol(C_{n}(R;\mathbb{R}))
  = 2^{n} \vol(I_{n,j})
  \asymp R^{-1+2\frac{\ln 2}{\ln 3}}
  \to    \infty
  ,
\]
as $R \to \infty $.
Using this volume bound and Hölder's inequality as in \eqref{first}, leads to
\[
  \big\Vert 1_{C_n(R;\R)} \Fourier 1_{C_n(R;\mathbb{R})} \big\Vert_{op}
  \lesssim R^{-1 + 2 \frac{\ln 2}{\ln 3}}
  \to \infty
  ,
\]
therefore not enough to assure that $X = Y = C(R;\R)$ satisfy an
uncertainty principle according to Dyatlov's definition
(see also Equation~\eqref{eq:FourierUncertaintyRescaling}).
However, from Example 2.6 and Theorems 2.12 and 2.13 in \cite{Dyatlov} it follows that there exists an
exponent $\beta > 0$ such that, as $R \to \infty $, \eqref{bound}
holds for $RX,RY = C_{n}(R;\mathbb{R})$ and $h = R^{-2}$.
Since $R \asymp 3^{n/2}$, as $n \rightarrow \infty$, also $R \to \infty$ and
Equation~\eqref{eq:FourierUncertaintyRescaling} implies that
\begin{equation}
  \big\Vert 1_{C_{n}(R;\R)}\mathcal{F}1_{C_{n}(R;\R)} \big\Vert_{op}
  = O(R^{-2\beta })
  \to 0
  .
  \label{FUP1}
\end{equation}

\subsection{Uncertainty and localization}
\label{sub:IntroUncertaintyLocalization}

The time-frequency localization operator of the previous paragraph suggests
a construction in a general Lebesgue space $L^{2}(\Lambda)$
(with $\Lambda $ being a metric measure space).
If $L^{2}(\Lambda)$ has a reproducing kernel $K(z,w)$ (for a subspace of $L^2(\Lambda)$),
one can define a localization/Toeplitz
operator $P_{\Omega}$ mapping the function $f\in L^{2}(\Lambda)$ to a smooth function
$P_{\Omega} f \in L^{2}(\Lambda)$ essentially concentrated in a bounded region $\Omega \subset \Lambda$.
The operator $P_{\Omega}$ is explicitly defined as
\[
  (P_{\Omega }f)(w)
  = \int_{\Omega} f(z) \overline{K(z,w)} d\mu (z)
  .
\]
In this context, we say that $\Omega = \Omega (R)$ ($R > 0$) satisfies an uncertainty
principle if, as $R \to \infty$,
\[
  \left\Vert P_{\Omega} \right\Vert_{op}
  = O(R^{-\beta })
  \quad \text{for some } \beta > 0
  .
\]
As in the previous paragraph, one can look for bounds of $\left\Vert P_{\Omega }\right\Vert_{op}$
when $\Omega $ is a fractal set.
In \cite{KnutsenLocalizationPaper}, the case of the Fock space setting has been
considered, in the equivalent formulation provided by the short-time-Fourier
transform \cite[Chapter 3]{Charly}.
For the analogy with our results, it will be convenient to rephrase the results in the Fock space.
This corresponds to the choice of the measure $d\mu (z)=e^{-\pi \left\vert z\right\vert ^{2}}dz$
on $\Lambda = \CC$ (where $dz$ is Lebesgue measure), of the kernel
\[
  K(z,w)
  = K_{Fock}(z,w)
  = e^{\pi z\overline{w}}
  ,
\]
and of $\Omega \subset \CC$ as the $n$-th iterate $C_n (R;\CC)$ of the \emph{planar Cantor set},
i.e.,
\begin{equation}
  C_{n}(R;\CC)
  := \{ z \in \CC \colon |z|^{2} \in C_{n}(R^{2};\R)\}
  .
  \label{eq:KnutsenCantor}
\end{equation}
The set $C_n(R; \CC)$ is a disjoint union of $2^{n}$ annuli $I_{n,j}^{\CC}$,
each of measure $\vol(I_{n,j}^{\mathbb{C}}(R^{2})) \!=\! \frac{\pi \,R^{2}}{3^{n}}$,
so that for each annulus we consider a $1/3^{n}$ part of the initial disk with area $\pi \,R^{2}$.
For this measure to be well distributed among $[ 0,\pi R^{2}] $, one takes
$\vol(I_{n,j}^{\CC}(R^{2})) = 1/(\pi \,R^{2})$, yielding $(\pi \,R^{2})^{2}\asymp 3^{n}$ and
\begin{equation*}
  \vol(C_{n}(R;\CC))
  \asymp R^{-2 + 4 \frac{\ln 2}{\ln 3}}
  \to \infty
\text{,}
\end{equation*}
as $R \to \infty$.
Moreover, as $R \to \infty $, it is proven in \cite[Corollary 4.1]{KnutsenLocalizationPaper} that
\[
  \left\Vert P_{C_{n}(R;\mathbb{C})}\right\Vert_{op}
  \asymp R^{-2+2\frac{\ln 2}{\ln 3}}\rightarrow 0
  .
\]
Thus, in the Fock case, the measure of the dilates of the iterated Cantor
set tends to $\infty $, while the norm of the operator tends to zero.

Small operator norms facilitate recovery in signal analysis problems \cite{AS}.
The operator norm of $P_{\Omega}$ for the Fock case discussed above
is maximized when $\Omega$ is a disk \cite{FK},
while there exist sets $\Omega $ with infinite Lebesgue
measure such that the operator norm of $P_{\Omega }$ is arbitrarily small \cite{Galbis}.
In the wavelet case, the operator norm is maximized when the localization domain
is a pseudohyperbolic disk \cite{RamosTilli}.

We will show in this paper that, in the case of the disk, we have a similar
situation: one can define \emph{a Cantor set in the disk, whose hyperbolic
measure tends to infinity}, and an associated Toeplitz operator,
\emph{whose operator norm tends to zero}.

Our contributions are organized as follows:
A disk version of the Cantor set is considered in the next section.
In the same section, the main result on the fractal uncertainty principle on Bergman spaces
is stated and translated to the language of analytic wavelets.
More details and proofs are given in the following two sections,
with the most technical estimates delegated to the last section of the paper.

\section{Fractal uncertainty principles for the Bergman space}

In this paper, we consider the reproducing kernel
of the \emph{weighted analytic Bergman space}
associated to the measure $dA_{\alpha }(z) = 2\alpha \, (1 - |z|^{2})^{2\alpha - 1} dA(z)$
on the disk $\D = \{ z \in \CC \colon |z| < 1 \}$,
where $\alpha \in (0,\infty)$ and where $dA(z) = \frac{dz}{\pi}$, with
$d z$ denoting the planar Lebesgue measure.
As shown in  \cite[Pages 4 and 5]{HKZ}, this reproducing kernel is explicitly given by
\begin{equation}
  \mathcal{K}_{\mathbb{D}}^{\alpha }(z,w)
  =\frac{1}{(1-z\overline{w})^{2\alpha + 1}}
  = \sum_{n=0}^{\infty} e_{n}^{\alpha }(z) \overline{e_{n}^{\alpha }(w)}
  ,
  \label{repBergman}
\end{equation}
where
\[
  e_{n}^{\alpha}(z)
  = \sqrt{\gamma _{n}}z^{n},
  \quad \text{and} \quad
  \gamma_{n}
  = \gamma_{n}^{\alpha}
  := \frac{\Gamma (n+1+2\alpha)}{\Gamma (n+1)\Gamma (1 + 2\alpha)}
  =\frac{1}{2\alpha} [B(n+1, 2\alpha)]^{-1}
  .
\]
Given this reproducing kernel, we consider the associated localization operator
\begin{equation}
  (P_{C_{n}(R;\mathbb{D})}^{(\alpha )}f)(w)
  = \int_{C_{n}(R;\D)}
      f(z) \, \overline{\mathcal{K}_{\mathbb{D}}^{\alpha }(z,w)}
    dA_{\alpha }(z)
  ,
  \label{eq:OurLocalizationOperator}
\end{equation}
where the fractal localization region is now the following disk version of
the iterates of the Cantor set
\begin{equation}
  C_{n}(R;\mathbb{D})
  := \left\{
       z \in \D
       \,\,:\,\,
       \frac{|z|^{2}}{1-|z|^{2}}\in C_{n}(R;\R)
     \right\}
  .
  \label{eq:DiscCantorSet}
\end{equation}
We will show (see Proposition~\ref{prop:DiskCantorGeometricProperties})
that\ $C_{n}(R;\mathbb{D})$ is a disjoint union of $2^{n}$
annuli $D_{\ell}^{(n)}(0,R)$, each of hyperbolic measure
$\mu_{\D}\bigl( D_{\ell}^{(n)}(0,R)\bigr) =\frac{R}{3^{n}}$.
Here, the hyperbolic measure $\mu_{\D}$ is given by
\begin{equation}
  \mu_{\D} (M)
  := \int_{M}
       (1 - |z|^2)^{-2}
     \, d A (z)
  \quad \text{for } M \subset \D \text{ Borel measurable}
  ,
  \label{eq:HyperbolicMeasure}
\end{equation}
so that the hyperbolic measure of the disk $D(0,r) = \{ w \in \CC \colon |w| < r \}$
with $r \in [0,1)$ is given by
\begin{equation}
  \mu_{\mathbb{D}}(D(0,r))
  = \int_{D(0,r)}
      (1 - |z|^{2})^{-2}
    d A(z)
  = 2 \int_{0}^r
        \frac{s}{(1 - s^2)^2}
      \, d s
  = \frac{r^{2}}{1-r^{2}}
  .
  \label{eq:HyperbolicMeasureOfDisk}
\end{equation}

Thus, as in the previous examples, for this measure to be well distributed among
$[0, R] $ we take $n$ such that $\mu_{\D}\bigl( D_{\ell}^{(n)}(0,R)\bigr) \asymp 1/R$,
leading to $3^{n}\asymp R^{2}$ and to $R^{2\frac{\ln 2}{\ln 3}}\asymp 2^{n}$, so that
\[
  \mu_{\D}\bigl(C_{n}(R;\mathbb{D})\bigr)
  = R \cdot \left( \frac{2}{3}\right)^{n}
  \asymp R^{2\frac{\ln 2}{\ln 3}-1}
  \to \infty
  ,
\]
as $R,n \to \infty$.
So far, everything is perfectly tuned with our model Fourier and time-frequency/Fock cases.
However, the analogue of the conditions (\ref{FUP1}) only holds in the asymp\-to\-tic case.
The bounds on the non-asymptotic case depend on the size of $R$.

\begin{theorem}\label{thm:CantorDiskLocalizationOperatorNorm}
Given $\alpha \in (0,\infty )$, there are constants $0 < C_{1}\leq C_{2} < \infty$
(which only depend on $\alpha $) such that the operator norm of the time-scale
localization operator $P_{C_{n}(R;\mathbb{D})}^{(\alpha)}$ satisfies
for all $n \in \N$ and $R \in (0,\infty)$ the estimate
\[
  C_{1}
  \begin{cases}
    \left( \frac{2}{3}\right) ^{n}R,                         & \text{if }0<R\leq 1 \\
    \left( \frac{2}{3}\right) ^{n}R^{1-\frac{\ln 2}{\ln 3}}, & \text{if }1\leq R\leq 3^{n} \\
    1,                                                       & \text{if }R\geq 3^{n}
  \end{cases}
  \leq \left\Vert P_{C_{n}(R;\mathbb{D})}^{(\alpha )}\right\Vert _{op}
  \leq C_{2}
       \begin{cases}
         \left( \frac{2}{3}\right) ^{n}R,                         & \text{if }0<R\leq 1, \\
         \left( \frac{2}{3}\right) ^{n}R^{1-\frac{\ln 2}{\ln 3}}, & \text{if }1\leq R\leq 3^{n}, \\
         1,                                                       & \text{if }R\geq 3^{n}.
       \end{cases}
\]
Furthermore, if $R$ is chosen so that $R^{2} \asymp 3^{n}$, then
\[
  \left\Vert P_{C_{n}(R;\mathbb{D})}^{(\alpha )}\right\Vert_{op}
  \asymp \left( \frac{2}{3}\right) ^{\frac{n}{2}}
  \asymp R^{\frac{\ln 2}{\ln 3}-1}
  \to 0
  ,
\]
as $R\rightarrow \infty $.
\end{theorem}

We note that the term
\[
  \delta
  = \delta _{C(R;\mathbb{R})}
  = \frac{\ln 2}{\ln 3}
\]
appearing in the exponent of $R^{-1+\frac{\ln 2}{\ln 3}}$ is
the Hausdorff dimension of the Cantor set.
The uncertainty principles in \cite{Dyatlov} consider more general fractal sets
and the results are obtained in terms of their Hausdorff dimensions.
See \cite{KN2,KN3} for new developments in this direction in the planar case.
As in the joint time-frequency case, following the suggestion
in the comments after \cite[Corollary 4.1]{KnutsenLocalizationPaper},
where the bound $R^{-2+2\delta }$ is obtained for the planar Cantor set,
this opens interesting problems, if one considers
more general fractal sets and seeks bounds of the associated Toeplitz
operator in terms of the Hausdorff dimension of the sets.

\subsection{Fractal uncertainty principle for analytic wavelets}

In this section we outline how our result can be written in terms of analytic wavelets.
We will use the basic notation for $\mathcal{H}^{2}(\CC^+)$,
the Hardy space in the upper half plane $\CC^+$, as the space of analytic functions
$f : \CC^+ \to \CC$ such that
\[
  \sup_{0 < s < \infty} \,\,
    \int_{-\infty}^{\infty}
      |f(x+is)|^{2}
    dx
  < \infty
  .
\]
To simplify the computations it is often convenient to use the equivalent
definition (since the Paley-Wiener theorem (see \cite{DGM} or \cite[Theorem~19.2]{RudinRCA})
shows up to canonical identifications that
$\Fourier (\mathcal{H}^{2}(\mathbb{C}^{+})) = L^{2}(0,\infty)$)
\[
  \mathcal{H}^{2}(\mathbb{C}^{+})
  = \left\{ f\in L^{2}(\mathbb{R}):(\mathcal{F} f)(\xi )
  = 0
  \text{ for almost all } \xi < 0\right\}
  .
\]
The \emph{wavelet transform} of a function $f \in \mathcal{H}^{2}(\CC^{+})$
with \emph{mother wavelet} $\psi \in \mathcal{H}^{2}(\CC^{+})$,
such that its \emph{admissibility constant}
$C_{\psi} = 2 \pi \cdot \left\Vert \Fourier \psi \right\Vert_{L^{2}(\R^+, t^{-1} dt)}^{2}$ is finite,
is defined as
\begin{equation}
  W_{\psi }f(z)
  := \int_{\mathbb{R}}
       f(t) \overline{s^{-\frac{1}{2}} \psi(s^{-1}(t-x))}
     dt
   = \sqrt{s}
     \int_{0}^{\infty}
       \widehat{f}(\xi) \overline{\widehat{\psi}(s\xi)} e^{ix\xi }
     d\xi ,
  \quad z = x + is \in \CC^+
  .
  \label{eq:WaveletTransform}
\end{equation}
The \emph{analytic wavelets} are the functions $\psi _{0}^{\alpha }$ defined
via their Fourier transforms by
\[
  (\mathcal{F}\psi _{0}^{\alpha}) (\xi)
  = \xi^{\frac{\alpha}{2}}
    \cdot e^{-\xi}
    \cdot 1_{(0,\infty)} (\xi)
  ,
  \quad \xi \in \R
  .
\]
As proven recently in \cite{AnalyticWavelet}, $W_{\psi} f(z)$ leads to
analytic (Bergman) phase spaces only for this special choice of $\psi $ (up to a phase factor).
Nevertheless, it is customary to refer to $W_{\psi }f(z)$ as the
analytic wavelet transform, due to the discard of negative frequencies.
We will write
\[
  d\mu^{+}(z)
  = \left( \mathrm{Im} z \right)^{-2} dz
  ,
\]
where $d z$ is the Lebesgue measure on\ $\CC^{+}$.
The orthogonality relations for the wavelet transform
\begin{equation}
  \int_{\CC^{+}}
     W_{\psi _{1}}f_{1}(z)
     \overline{W_{\psi _{2}}f_{2}(z)}
   d\mu ^{+}(z)
   = 2 \pi
     \cdot \left\langle \Fourier \psi_{1}, \Fourier \psi_{2} \right\rangle_{L^{2}(\R^{+},t^{-1}dt)}
     \cdot \left\langle f_{1}, f_{2}\right\rangle_{\mathcal{H}^{2}(\CC^{+})}
  \label{ortogonalityrelations}
\end{equation}
are valid for all $f_{1},f_{2}\in \mathcal{H}^{2}(\CC^{+})$
and all admissible $\psi_{1}, \psi_{2} \in \mathcal{H}^{2}(\CC^{+})$; see \cite[Proposition~2.4.1]{Dau}.
Then, setting $\psi_{1} = \psi_{2} = \psi$ and $f_{1}=f_{2}$ in \eqref{ortogonalityrelations},
gives
\[
  \int_{\CC^{+}}
     \left\vert W_{\psi }f(z)\right\vert^{2}
  d\mu^{+}(z)
  = C_{\psi} \cdot \left\Vert f \right\Vert_{\mathcal{H}^{2}(\CC^{+})}^{2}
  ,
\]
showing that the continuous wavelet transform provides an (up to a constant factor) isometric
inclusion
\[
  W_{\psi} : \quad
  \mathcal{H}^{2}(\CC^{+}) \to L^{2}(\CC^{+}, \mu^{+})
  .
\]
Writing $z = x + i s$ and setting $\psi _{1} = \psi _{2} = \psi$ and $f_{2} = \pi(z) \psi$,
where
\[
  [\pi (z)]\psi (t)
  := s^{-\frac{1}{2}} \cdot \psi (s^{-1}(t-x))
  ,
\]
in \eqref{ortogonalityrelations}, then for every $f\in \mathcal{H}^{2}(\CC^{+})$, one has
\[
  W_{\psi }f(z)
  = \frac{1}{C_{\psi}}
    \int_{\CC^{+}}
       W_{\psi} f(w)
       \langle \pi (w)\psi ,\pi (z)\psi \rangle
    d\mu^{+}(w),
  \quad z \in \CC^{+}
  .
\]
Thus, the range of the wavelet transform
\[
  \mathcal{W}_{\psi}
  := \big\{
       F \in L^{2}(\CC^{+}, \mu^{+})
       \,\,:\,\,
       F = W_{\psi }f, \, f\in \mathcal{H}^{2}(\CC^{+})
     \big\}
\]
is a closed subspace of $L^{2}(\CC^{+}, \mu^+)$, with reproducing kernel
\[
  k_{\psi }(z,w)
  = \frac{1}{C_{\psi}}
    \langle \pi (w)\psi ,\pi (z)\psi \rangle_{\mathcal{H}^{2}(\CC^{+})}
  = \frac{1}{C_{\psi}}
    W_{\psi} \psi ({w}^{-1}z),
  \quad \text{and} \quad
  k_{\psi}(z,z)
  = \frac{\Vert \psi \Vert _{2}^{2}}{C_{\psi }}
  .
\]
Here, the multiplication (and inversion) on $\CC^+$ is \emph{not} the usual multiplication
inherited from $\CC$, but stems from identifying $\CC^+ \cong \R \times (0,\infty)$
with the $a x + b$ group, so that $(x + i s) (y + i v) = x + s y + i s v$.

For $z = x + is, w = y + iv\in \CC^{+}$, the  kernel $k_{\psi_{0}^{2\alpha}}(z,w)$ is given by%
\footnote{To see this, first note that $\int_0^\infty x^{z-1} e^{-s x} \, d x = \frac{\Gamma(z)}{s^z}$
for all $s, z \in \CC$ with $\operatorname{Re} s, \operatorname{Re} z > 0$;
this can be seen by keeping $z$ fixed and verifying the identity for $s \in (0,\infty)$;
this is enough, since both sides of the identity are holomorphic functions.
Combining this formula with the definition of $\psi_0^{2 \alpha}$ and the right-hand side
of Equation~\eqref{eq:WaveletTransform} easily implies
\(
  W_{\psi_0^{2\alpha}} \psi_0^{2 \alpha} (x + i s)
  = s^{\frac{1}{2} + \alpha} \int_0^\infty \xi^{2 \alpha} e^{-(1 + s - i x) \xi} \, d \xi
  = s^{\frac{1}{2} + \alpha} \frac{\Gamma(2 \alpha + 1)}{(1 + s - i x)^{2 \alpha + 1}}
  .
\)
Furthermore, directly from the definition of $C_\psi$ and of $\psi_0^{2\alpha}$, we see
\(
  C_{\psi_0^{2\alpha}}
  = 2 \pi \int_0^\infty \xi^{2 \alpha} e^{-2 \xi} \frac{d \xi}{\xi}
  = \frac{\pi}{2^{2\alpha - 1}} \Gamma(2 \alpha)
  .
\)
From this, Equation~\eqref{eq:repr-kernel} follows easily.
}
\begin{equation}
  k_{\psi_{0}^{2\alpha}}(z,w)
  = \frac{1}{C_{\psi _{0}^{2\alpha }}}W_{\psi_{0}^{2\alpha}} \psi _{0}^{2\alpha}({w}^{-1}\cdot z)
  = \frac{2^{2\alpha} \alpha}{\pi}
    \cdot \left(
            \frac{\sqrt{\mathrm{Im}\,z\,\mathrm{Im}\,w}}{-i(z-\overline{w})}
          \right)^{2\alpha + 1}
  .
  \label{eq:repr-kernel}
\end{equation}
This is a multiple of the reproducing kernel of the Bergman space
\[
  A^{2\alpha - 1}({\mathbb{C}}^{+})
  := \left\{
       f : \CC^{+} \to \CC \text{ holomorphic}
       \,\,:\,\,
       \int_{\CC^{+}}
         |f(z)|^{2} \cdot \text{Im}(z)^{2\alpha - 1}
       d z
       <\infty
     \right\}
  .
\]
Therefore,
\(
  (\mathrm{Im} \, \cdot)^{-\alpha -1/2} \, \mathcal{W}_{\psi _{0}^{2\alpha }}
  : \mathcal{H}^{2}(\CC^{+}) \to A^{2\alpha -1}(\CC^{+})
\)
is an (up to a constant factor) isometric inclusion.
Moreover, $A^{2\alpha -1}(\CC^{+})$ is conformally equivalent to the Bergman space
$A^{2\alpha -1}({\mathbb{D}})$ on the unit disk under the transformation
\[
  z \in {\mathbb{C}}^{+} \mapsto \xi (z) = \frac{z-i}{z+i} \in \mathbb{D}
  ;
\]
see \cite{DGM}.
It follows that $P_{C_{n}(R;\mathbb{D})}^{(\alpha )}$ can be unitarily mapped
to (a constant multiple of) the operator
\begin{equation}
  \bigl(P_{\xi^{-1}(C_{n}(R;\mathbb{D}))}f\bigr)(w)
  = \int_{\xi ^{-1}(C_{n}(R;\mathbb{D}))}
      f(z) \cdot \overline{k_{\psi _{0}^{2\alpha }}(z,w)}
    dz
  .
  \label{DP}
\end{equation}
Thus, Theorem~\ref{thm:CantorDiskLocalizationOperatorNorm}
is equivalent to an uncertainty principle for wavelet
representations on the Cantor set of ${\mathbb{C}}^{+}$ defined by $\xi^{-1}(C_{n}(R;\mathbb{D}))$.
The analyzing wavelets $\psi_{0}^{2\alpha}$ are
the only ones leading to an analytic structure \cite{AnalyticWavelet}.
Considering more general classes of analyzing wavelets, as those leading to a
polyanalytic decomposition in \cite{VasiBergman,SF,Hun11}, or the slightly
different ones connected to Maass forms and hyperbolic Landau levels \cite{Mouayn,Maass},
may be a natural extension of the problem we have considered here.
The Toeplitz operator \eqref{DP} has been first considered by
Daubechies and Paul \cite{dapa88} for $n = 0$, i.e., for $C_0(R; \D)$
The approach based on double orthogonality that we use is due to Seip \cite{SeipDoubleOrt}
and provides some insight on why the approach using circular symmetric sets in the plane and in the
disk considerably simplifies the problem.
With square time-frequency regions, as required by the Fourier transform approach of Dyatlov,
one has no access to explicit eigenvalue formulas.
The general eigenvalue problem of Gabor and wavelet localization operators
has been considered in \cite{DeMarie}.

\section{The Cantor set}
\label{sec:CantorSet}

\subsection{The Cantor set in the line}
\label{sub:CantorSetReal}

The usual Cantor set $C \subset [0,1]$ is defined as $C \!=\! \bigcap_{n=1}^\infty C_n$,
where each of the sets $C_n$ is a finite union of closed intervals
which are iteratively constructed by the usual operation
of ``removing the middle third'' of each of the intervals;
see e.g.\ \mbox{\cite[Section~2.44]{BabyRudin}}.
For instance,
\[
  C_{0} = [0,1],
  \qquad
  C_{1} = [0,\tfrac{1}{3}]\cup [ \tfrac{2}{3},1],
  \quad \text{and} \quad
  C_{2} = [0,\tfrac{1}{9}]
          \cup [\tfrac{2}{9}, \tfrac{1}{3}]
          \cup [\tfrac{2}{3},\tfrac{7}{9}]
          \cup [\tfrac{8}{9},1]
  .
\]

Now, given $R > 0$, we consider the dilated real Cantor set $C(R;\R) = R \cdot C \subset [0,R]$.
Therefore, we have
\[
  C(R;\mathbb{R})
  = \bigcap_{n=1}^{\infty }C_{n}(R;\mathbb{R})
  ,
\]
where
\begin{equation}
  C_{n}(R;\mathbb{R})
  := R\cdot C_{n}
  \quad \text{with} \quad
  C_{n} = \biguplus_{a\in \{0,2\}^{n}}
            \left[\,
              \sum_{j=1}^{n}\frac{a_{j}}{3^{j}} \,, \quad
              3^{-n}+\sum_{j=1}^{n}\frac{a_{j}}{3^{j}}
            \,\right]
  .
  \label{eq:RealLineCantorSet}
\end{equation}

To simplify the notation, let $\Omega ^{(n)} := \{0,2\}^{n}$,
and for $a = (a_{1}, \dots, a_{n}) \in \Omega ^{(n)}$ define
\[
  b_{a}
  = b_{a}^{(n)}
  := \sum_{j=1}^{n}
       \frac{a_{j}}{3^{j}},
\]
so that
\begin{equation}
  C_{n}
  = \biguplus_{a\in \{0,2\}^{n}}
      I_a^{(n)}
  \qquad \text{with} \qquad
  I_a^{(n)} := \left[\, b_{a}^{(n)} \,, 3^{-n}+b_{a}^{(n)} \right]
  .
  \label{eq:RealCantorDefinition}
\end{equation}
For analyzing the size of the endpoints $b_{a}$, it is convenient to
introduce for $a\in \Omega ^{(n)}\setminus \{0\}$ the \emph{order} of its index as
\[
  \omega_{a}
  := \min \big\{ \ell \in \{1,\dots ,n\}\colon a_{\ell }\neq 0 \big\}
  ,
\]
which gives rise to the restricted index sets
\[
  \Omega_{m}^{(n)}
  := \left\{
       a\in \Omega^{(n)}\setminus \{0\}
       \quad \colon \quad
       \omega _{a}=m
     \right\}
  \quad \text{for}\quad m\in \{1,\dots ,n\}
  .
\]
It is straightforward to verify that $\Omega _{m}^{(n)} = \{0\}^{m-1}\times \{2\} \times \{0,2\}^{n-m}$,
and hence
\begin{equation}
  \#\Omega _{m}^{(n)}
  = 2^{n-m}
  .
  \label{eq:SpecialIndexSetCardinality}
\end{equation}
We will frequently use the estimate
\begin{equation}
  \frac{2}{3^{m}}
  \leq b_{a}
  =    \sum_{j=m}^{n} \frac{a_{j}}{3^{j}}
  \leq \frac{2}{3^{m}}
       \sum_{\ell=0}^{\infty}
         3^{-\ell }
  =    \frac{3}{3^{m}}
  \qquad \forall \,a\in \Omega_{m}^{(n)}.
  \label{eq:LowerCantorBoundAsymptotic}
\end{equation}

\subsection{The Cantor set in the plane}
\label{sub:CantorRemark}

The key properties of the Cantor-type set defined in the complex plane in
\cite{KnutsenLocalizationPaper} will be preserved in the construction of the
next subsection (modulo the required adaptations to the disk).
First, since $C_{n}(R^{2};\R)$ is a disjoint union of $2^{n}$ intervals,
the set $C_{n}(R; \CC)$ from \eqref{eq:KnutsenCantor} is a disjoint union of $2^{n}$ annuli.
Furthermore, the condition $|z|^{2}\in C_{n}(R^{2};\mathbb{R})$ is chosen since it ensures
that all of the $2^{n}$ annuli have the same Lebesgue measure.

\subsection{The Cantor set in the disk}
\label{sub:CantorSetDisk}

Our goal is to define the $n$-th iterate of the Cantor-type set in the disk
such that it is a union of $2^{n}$ disjoint annuli with
the same hyperbolic measure, where we saw in Equation~\eqref{eq:HyperbolicMeasureOfDisk} that
\[
  \mu_{\D} (D(0,r))
  = \frac{r^2}{1 - r^2}
  \quad \text{for} \quad
  D(0,r) = \{ w \in \CC : |w| < r\}
  \text{ and } r \in [0,1).
\]
This suggests defining the $n$-th Cantor set as the set of all $w \in \D$
such that $\varphi (|w|) \in C_{n}(R;\R)$, where
\[
  \varphi : \quad
  [0,1)\rightarrow [0, \infty), \quad
  r \mapsto \varphi(r)
    =       \frac{r^{2}}{1-r^{2}}
  .
\]
More explicitly, as in \eqref{eq:DiscCantorSet}, we define
\[
  C_{n}(R;\D)
  := \left\{
        w \in \D
        \,\,\colon\,\,
        \frac{|w|^{2}}{1-|w|^{2}} \in C_{n}(R;\R)
     \right\}
  .
\]
We next show that this construction indeed yields a set $C_{n}(R; \D)$
that behaves similarly to the set $C_{n}(R;\CC)$ from
\cite{KnutsenLocalizationPaper} discussed in Sections~\ref{sub:CantorRemark}
and \ref{sub:IntroUncertaintyLocalization}.
For $a \in \Omega^{(n)}$, let us write
\begin{equation}
  D_{a}^{(n)}(0,R)
  := \left\{
       w\in \mathbb{D}
       \,\,:\,\,
       \varphi(|w|) \in R \cdot I_{a}^{(n)}
     \right\}
  \quad \text{with} \quad
  I_a^{(n)} = [b_a, \,\, 3^{-n} + b_a]
  \text{ as in \eqref{eq:RealCantorDefinition}}
  .
  \label{eq:CantorAnnulus}
\end{equation}

\begin{proposition}\label{prop:DiskCantorGeometricProperties}
The $n$-th disk Cantor set $C_{n}(R;\D)$ is a disjoint union of $2^{n}$ annuli:
\[
  C_{n}(R;\mathbb{D})
  = \biguplus_{a \in \Omega^{(n)}}
      D_{a}^{(n)}(0,R)
  \qquad \text{with} \qquad
  D_{a}^{(n)}(0,R) \text{ as in \eqref{eq:CantorAnnulus}}
  .
\]
Each annulus has hyperbolic measure $\mu_{\D}(D_{a}^{(n)}(0,R)) = \frac{R}{3^{n}}$.
Therefore,
\[
  \mu_{\mathbb{D}}\left( C_{n}(R;\mathbb{D})\right)
  = \left( \frac{2}{3} \right)^{n} \cdot R
  .
\]
\end{proposition}

\begin{proof}
Recall the definition \eqref{eq:HyperbolicMeasure} of the hyperbolic measure $\mu_{\D}$.
Next, given an arbitrary measurable function $F : [0,\infty) \to [0,\infty]$,
introduce polar coordinates, set $s = r^{2}$ and then $t = \frac{s}{1-s}$ to yield
\begin{align*}
  \int_{\D}
    F\!\left( \frac{|z|^{2}}{1-|z|^{2}}\right)
  d \mu_{\D}(z)
  & = 2 \int_{0}^{1}\!
          F\left( \frac{r^{2}}{1-r^{2}}\right)
          \frac{r}{(1-r^{2})^{2}}
        \,d r \\
  & = \int_{0}^{1}\!
        F\left( \frac{s}{1-s}\right)
        (1-s)^{-2}
      \,ds
    = \int_{0}^{\infty} F(t) \,d t
  .
\end{align*}
Since $C_{n}(R;\R) = \biguplus_{a \in \Omega^{(n)}} R \cdot I_{a}^{(n)}$,
where the intervals $I_{a}^{(n)} = [b_a, 3^{-n} + b_a] \subset [0,\infty)$
have length $|I_{a}^{(n)}| = 3^{-n}$, then
\[
  C_{n}(R;\D)
  = \biguplus_{a \in \Omega^{(n)}}
      D_{a}^{(n)}(0,R)
  = \biguplus_{a \in \Omega^{(n)}}
      \big\{
        w \in \D
        :
        \varphi^{-1}(R \cdot b_a)
        \leq |w|
        \leq \varphi^{-1}(R \cdot (3^{-n} + b_a))
      \big\}
\]
is a disjoint union of $2^{n}$ annuli, each of which has $\mu_{\D}$-measure
\begin{align*}
  \mu_{\mathbb{D}}\left( D_{a}^{(n)}(0,R)\right)
  & = \int_{\D}
        1_{R \cdot I_{a}^{(n)}}
        \big( |z|^{2}/(1-|z|^{2}) \big)
      \, d \mu_{\D}(z) \\
  & = \int_{0}^{\infty}
        1_{R \cdot I_{a}^{(n)}}(t)
      \, d t
    = \frac{\,R}{3^{n}}.
  \qedhere
\end{align*}
\end{proof}

\section{Localization to the Cantor set in the disk}
\label{sec:CantorLocalizationOperators}

In this section, we study the spectral properties of the time-scale
localization operators
\[
  P_{C_{n}(R;\mathbb{D})}^{(\alpha)}
  \quad \text{for} \quad
  R\in (0,\infty), \,\, n\in \mathbb{N},
  \text{ and } \alpha \in (0,\infty),
\]
as defined in \eqref{eq:OurLocalizationOperator}.

\subsection{Eigenvalues of the Cantor-type localization operator}

\begin{proposition}\label{prop:CantorEigenvaluesExplicit}
Let $\alpha, R \in (0,\infty)$ and $n \in \N$.
Then the eigenvalues of the operator $P_{C_{n}(R;\D)}^{(\alpha)}$ introduced in
Equation~\eqref{eq:OurLocalizationOperator} are given by
\[
  \lambda_{k}
  = \lambda_{k}^{(\alpha)}\left(C_{n}(R;\D)\right)
  = \int_{0}^{\infty}
      g_{k}(t;\alpha) \, 1_{C_{n}(R;\R)}(t)
    \, d t
  \quad \text{for} \quad
  k \in \N_0
  ,
\]
where $g_{k}(\cdot \,;\alpha) : (0,\infty) \to (0,\infty)$ is given by
\begin{equation}
  g_{k}(t;\alpha)
  = [B(k+1,2\alpha)]^{-1}
    \cdot \left( \frac{t}{1+t}\right)^{k}
    \cdot (1+t)^{-(1+2\alpha)}
  .
  \label{eq:GkDefinition}
\end{equation}
\end{proposition}

\begin{proof}
Observe that the orthogonality of the basis functions $e_{k}^{\alpha }(z) = \sqrt{\gamma_{k}} z^{k}$
in the reproducing kernel expansion \eqref{repBergman} holds for any radial measure
$\mu (\left\vert z\right\vert)dA(z)$, with $\mu : [0,\infty) \rightarrow [0,\infty)$ satisfying
$\int_{0}^{1}\mu (r)dr < \infty$.
This follows from the following calculation, for $k \neq \ell$:
\begin{align*}
  \int_{\mathbb{D}}
    z^{k} \,
    \overline{z^{\ell}} \,
    \mu (|z|)
  d A(z)
  &= \frac{1}{\pi}
     \int_{0}^{1}
       \int_{0}^{2\pi}
         (r e^{i\theta })^{k}
         (r e^{-i\theta})^{\ell}
         \mu (|re^{i\theta}|)
         r
       d \theta
     d r \\
  &= \frac{1}{\pi}
     \int_{0}^{1}
       r^{k + \ell + 1}
       \mu (r)
       \int_{0}^{2\pi}
         e^{i\theta (n-m)}
       d \theta
     d r
   = 0
  .
\end{align*}

Thus, we can set
\[
  \mu (|z|)
  = 2 \alpha
    \cdot (1 - | z|^{2})^{2\alpha -1}
    \cdot 1_{C_{n}(R;\D)}(z)
  = 2 \alpha
    \cdot (1 - | z|^{2})^{2\alpha -1}
    1_{C_{n}(R;\R)} \left( \frac{|z|^{2}}{1-|z|^{2}}\right)
\]
and combine \eqref{repBergman} with the orthogonality of the $e_{k}^{\alpha }(z)$
with respect to $\mu (|z|)dA(z)$ to obtain%
\footnote{Here, we use that the series for $\mathcal{K}_{\D}^{\alpha}(\cdot,w)$
given in \eqref{repBergman} is a power series which is convergent on $\D$
and thus converges locally on $\D$.
Since the measure $\mu$ has compact support in $\D$ (since $C_n(R;\D) \subset \D$ is compact),
this allows to interchange the series with the integral.}
\[
  \bigl[P_{C_{n}(R;\D)}^{(\alpha)} (e_{k}^{\alpha })\bigr] (w)
  = \int_{\D}
      e_{k}^{\alpha }(z)
      \overline{\mathcal{K}_{\mathbb{D}}^{\alpha }(z,w)}\,
      \mu (|z|)
    d A(z)
  = \left[
      \int_{\D}
        |e_{k}^{\alpha}(z)|^{2}
        \, \mu (|z|)
      d A(z)
    \right]
  e_{k}^{\alpha}(w)
  .
\]
Thus, $e_{k}^{\alpha }$ is an eigenfunction of $P_{C_{n}(R;\D)}^{(\alpha )}$ with eigenvalue
\begin{align*}
  \lambda _{k}^{(\alpha )}(C_{n}(R;\D))
  & = \int_{\D}
        |e_{k}^{\alpha}(z)|^{2}\,
        \mu (|z|)
      d A(z) \\
  & = 2 \alpha \gamma_{k}
      \int_{\D}
        |z|^{2k}
        \cdot (1 - |z|^{2})^{2\alpha - 1}
        \cdot 1_{C_{n}(R;\R)} \left( \frac{|z|^{2}}{1-|z|^{2}} \right)
      d A(z) \\
  &= 4 \alpha \gamma_{k}
     \int_{0}^{1}
       1_{C_{n}(R;\R)}\left( \frac{r^{2}}{1-r^{2}}\right)
       \cdot r^{2k+1}
       \cdot (1 - r^{2})^{2\alpha -1}
     \,dr
  .
\end{align*}
Using the substitutions $s = r^{2}$ and $t = \frac{s}{1-s} = \frac{1}{1-s} - 1$, we finally see
\begin{align*}
  \lambda_{k}^{(\alpha)}(C_{n}(R;\D))
  & = 2\alpha \gamma_{k}
      \int_{0}^{1}
        1_{C_{n}(R;\R)}\left( \frac{s}{1-s}\right)
        \cdot s^{k}
        \cdot (1 - s)^{2\alpha -1}
      \, d s \\
  & = 2 \alpha \gamma_{k}
      \int_{0}^{\infty}
        1_{C_{n}(R;\R)}(t)
        \cdot \left(\frac{t}{1+t}\right)^{k}
        \cdot (1 + t)^{-(2\alpha + 1)}
      \, d t \\
  & = \int_{0}^{\infty}
        1_{C_{n}(R;\R)}(t)
        \cdot g_{k}(t;\alpha)
      \, d t
  .
  \qedhere
\end{align*}
\end{proof}

\begin{remark}
We observe that the function $g_{k}(\cdot \,;\alpha )$ is the density
function of the \emph{Beta prime distribution} $\beta ^{\prime }(k+1,2\alpha)$
with form parameters $k+1$ and $2\alpha $;
see \cite[Equation~(25.79)]{JohnsonContinuousUnivariateDistributionsVolTwo}.
In particular, this implies
\begin{equation}
  \int_{0}^{\infty}
    g_{k}(t;\alpha )
  \,dt
  = 1
  \label{eq:gkIsDensityFunction}
\end{equation}
and the following probabilistic interpretation of the eigenvalues
\[
  \lambda_{k}^{(\alpha)}(C_{n}(R;\mathbb{D}))
  = \mathbb{E} \left( 1_{C_{n}(R;\R)}(X) \right)
  \quad \text{if } X \sim \beta^{\prime}(k+1,2\alpha)
  .
\]
Recalling the definition of $g_{k}(\cdot \,;\alpha )$ and applying
\cite[Section~3.197.8]{TableOfIntegralsAndSeries} (with $\mu =1$, $\nu = k + 1$, $\alpha =1$,
$\lambda =-(k+1+2\alpha )$ and $u = y$) shows that the cumulative distribution
function of $\beta^{\prime }(k+1,2\alpha)$ is given by
\[
  F_{k,\alpha}(y)
  = \int_{0}^{y}
      g_{k}(t;\alpha)
    \, d t
  = \frac{y^{1+k}}{(1+k)
    \cdot B(1+k,2\alpha)}
    \cdot {}_{2}F_{1}(k+1+2\alpha ,k+1;k+2;-y)
  ,
\]
in terms of the ordinary hypergeometric function
${}_{2}F_{1}( a,b,c;z ) = \sum_{j=0}^{\infty} \frac{(a)_{j}(b)_{j}}{j!(c)_{j}}z^{j}$.
When $n = 0$, then $1_{C_{0}(R;\R)} = 1_{[0,R]}$ and thus
\[
  \lambda_{k}^{(\alpha)}(C_{0}(R;\D))
  = \frac{R^{1+k}}{(1+k)
    \cdot B(1+k,2\alpha)}
    \cdot {}_{2}F_{1}(k+1+2\alpha ,1+k;k+2;-R)
  .
\]
\end{remark}

\subsection{Upper bounding the eigenvalues}
\label{sub:CantorLocalizationUpperBounds}

The upper bounds for the eigenvalues depend on the following pointwise
estimate for the density functions $g_{k}(\cdot \,;\alpha )$.
The details of the proof are in Section~\ref{sec:DensityFunctionUpperBound}.

\begin{lemma}\label{lem:DensityFunctionUpperBound}
For each $\alpha \in (0,\infty)$, there is a constant $C = C(\alpha) > 0$ satisfying
\begin{equation*}
  g_{k}(x;\alpha)
  \leq C \cdot (1+x)^{-1}
  \qquad \forall \,x\in (0,\infty )\text{ and } k \in \mathbb{N}_{0}.
\end{equation*}
\end{lemma}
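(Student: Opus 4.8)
The plan is to prove the uniform bound
\[
g_{k}(x;\alpha)=B(k+1,2\alpha)^{-1}\left(\frac{x}{1+x}\right)^{k}(1+x)^{-(1+2\alpha)}\le C(\alpha)\,(1+x)^{-1}
\]
by isolating the $k$-dependence. After writing $g_{k}(x;\alpha)=(1+x)^{-1}\cdot h_{k}(x)$ with
\[
h_{k}(x)=B(k+1,2\alpha)^{-1}\left(\frac{x}{1+x}\right)^{k}(1+x)^{-2\alpha},
\]
the task reduces to showing $\sup_{x>0}h_{k}(x)\le C(\alpha)$ uniformly in $k\in\mathbb{N}_{0}$. The natural approach is to maximize $h_{k}$ over $x$ for fixed $k$ and then control the resulting expression in $k$ via known asymptotics of the Beta function.

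First I would locate the maximizer. Substituting $u=\frac{x}{1+x}\in(0,1)$ (so $1+x=(1-u)^{-1}$) turns $h_{k}$ into $B(k+1,2\alpha)^{-1}\,u^{k}(1-u)^{2\alpha}$, whose logarithmic derivative vanishes at $u_{*}=\frac{k}{k+2\alpha}$. Plugging $u_{*}$ back in gives the maximal value
\[
\sup_{x>0}h_{k}(x)=B(k+1,2\alpha)^{-1}\left(\frac{k}{k+2\alpha}\right)^{k}\left(\frac{2\alpha}{k+2\alpha}\right)^{2\alpha}.
\]
So the lemma is equivalent to showing that this quantity is bounded in $k$. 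This is the step I expect to carry the real content.

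The main obstacle is the asymptotic analysis of this expression as $k\to\infty$, which requires Stirling-type control. Writing $B(k+1,2\alpha)^{-1}=\frac{\Gamma(k+1+2\alpha)}{\Gamma(k+1)\Gamma(2\alpha)}$ and applying the standard ratio asymptotic $\frac{\Gamma(k+1+2\alpha)}{\Gamma(k+1)}\sim k^{2\alpha}$ as $k\to\infty$, one finds $B(k+1,2\alpha)^{-1}\sim \frac{k^{2\alpha}}{\Gamma(2\alpha)}$. Meanwhile the factor $\big(\frac{k}{k+2\alpha}\big)^{k}=\big(1+\frac{2\alpha}{k}\big)^{-k}\to e^{-2\alpha}$ and $\big(\frac{2\alpha}{k+2\alpha}\big)^{2\alpha}\sim (2\alpha)^{2\alpha}k^{-2\alpha}$. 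Multiplying these three asymptotics, the two powers $k^{\pm 2\alpha}$ cancel exactly and the product converges to the finite limit $\frac{(2\alpha)^{2\alpha}e^{-2\alpha}}{\Gamma(2\alpha)}$. This cancellation of the polynomial growth is the crux: it confirms that $\sup_{x}h_{k}(x)$ is bounded.

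Finally I would convert the limiting statement into the desired uniform bound. Since the sequence $\sup_{x>0}h_{k}(x)$ converges to a finite limit as $k\to\infty$ and each individual term is finite (the function $h_{k}$ is continuous on $(0,\infty)$ and vanishes at both endpoints, so its supremum is attained and finite), the whole sequence is bounded by some constant $C=C(\alpha)<\infty$. Taking this $C$ yields $g_{k}(x;\alpha)\le C\,(1+x)^{-1}$ for all $x>0$ and all $k\in\mathbb{N}_{0}$, as claimed. The only care needed is to make the ratio asymptotic for the Gamma quotient rigorous with explicit constants, rather than merely $\sim$, or alternatively to invoke convergence of the sequence plus finiteness of each term to extract the uniform bound without tracking constants; I would favor the latter as it is cleaner and avoids grinding through Stirling error terms.
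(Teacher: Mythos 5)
Your proof is correct, but it takes a genuinely different route from the paper's. The paper fixes $x$ and optimizes over the index: it first uses a quantitative form of Stirling's formula to get the uniform bound $B(k+1,2\alpha)^{-1}\leq C\,k^{2\alpha}$, then maximizes $t\mapsto t^{2\alpha}y^{t}$ over a \emph{continuous} $t$ (with $y=\frac{x}{1+x}$ fixed), obtaining $k^{2\alpha}y^{k}\leq\bigl(\tfrac{2\alpha}{\ln(1/y)}\bigr)^{2\alpha}e^{-2\alpha}$, and finally must convert $[\ln(1+x^{-1})]^{-2\alpha}$ into $(1+x)^{2\alpha}$ by a two-case logarithm estimate, with $k=0$ treated separately. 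You instead fix $k$ and maximize over $x$: the substitution $u=\frac{x}{1+x}$ turns $h_{k}$ into the profile $B(k+1,2\alpha)^{-1}u^{k}(1-u)^{2\alpha}$, whose exact maximum $B(k+1,2\alpha)^{-1}\bigl(\tfrac{k}{k+2\alpha}\bigr)^{k}\bigl(\tfrac{2\alpha}{k+2\alpha}\bigr)^{2\alpha}$ is a closed-form sequence in $k$; the Gamma-ratio asymptotic $\Gamma(k+1+2\alpha)/\Gamma(k+1)\sim k^{2\alpha}$ then shows it converges to $(2\alpha)^{2\alpha}e^{-2\alpha}/\Gamma(2\alpha)$, and a convergent sequence of finite terms is bounded. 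Your route is shorter and avoids the logarithm case analysis entirely; its price is a non-explicit constant, though you could make it explicit by quantifying the ratio asymptotic (e.g., Wendel's inequality gives $\Gamma(k+1+2\alpha)/\Gamma(k+1)\leq(k+1+2\alpha)^{2\alpha}$), whereas the paper tracks explicit constants throughout at the cost of more casework. One small inaccuracy: your parenthetical claim that $h_{k}$ ``vanishes at both endpoints'' fails for $k=0$, since $h_{0}(x)=2\alpha(1+x)^{-2\alpha}\to 2\alpha$ as $x\to 0^{+}$; this is harmless because $\sup_{x}h_{0}=2\alpha$ is finite and agrees with your closed formula under the convention $0^{0}=1$, but the degenerate case $k=0$ (where the interior critical point disappears) deserves one explicit sentence.
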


In this section, we prove the following upper bound
for the eigenvalues $\lambda_{k}^{(\alpha)}(C_{n}(R;\D))$:

\begin{proposition}\label{prop:EigenvaluesUpperBound}
For each $\alpha \in (0,\infty)$, there is a constant $C = C(\alpha) > 0$ satisfying
\[
  \lambda_{k}^{(\alpha)}(C_{n}(R;\D))
  \leq C
       \begin{cases}
       (2/3)^{n} \cdot R,                         & \text{if } 0 < R \leq 1, \\
       (2/3)^{n} \cdot R^{1-\frac{\ln 2}{\ln 3}}, & \text{if }1 \leq R \leq 3^{n}, \\
       1,                                         & \text{if }R \geq 3^{n}
       \end{cases}
\]
for all $k \in \mathbb{N}_{0}$, $n \in \mathbb{N}$, and $R \in (0,\infty)$.
\end{proposition}

\begin{proof}
First of all, recall from Equation~\eqref{eq:gkIsDensityFunction} that
each $g_{k}(\cdot \,;\alpha )$ is a probability density function on $(0,\infty )$, so that
\[
  \lambda_{k}^{(\alpha )}(C_{n}(R;\mathbb{D}))
  = \int_{C_{n}(R;\R)}
      g_{k}(y;\alpha )
    \, d y
  \leq \int_{0}^{\infty}
         g_{k}(y;\alpha )
       \, d y
  =    1.
\]
This establishes the desired estimate in case of $R \geq 3^{n}$.
Next, recall that the Lebesgue measure $\lambda (C_{n}(R;\mathbb{R}))$ of the Cantor set
$C_{n}(R;\R)$ is $\lambda (C_{n}(R;\R)) = (2/3)^{n} \cdot R$.
Furthermore, Lemma~\ref{lem:DensityFunctionUpperBound} yields a constant
$C = C(\alpha) > 0$ satisfying $g_{k}(x;\alpha )\leq C\cdot (1+x)^{-1} \leq C$
for all $x\in (0,\infty )$ and $k\in \mathbb{N}_{0}$.
Therefore,
\[
  \lambda_{k}^{(\alpha)}(C_{n}(R;\D))
  = \int_{C_{n}(R;\R)}
      g_{k}(y;\alpha )
    \, d y
  \leq C \cdot \lambda (C_{n}(R;\R))
  =    C \cdot (2/3)^{n} \cdot R.
\]
This proves the desired estimate for the case that $0 < R \leq 1$.
Finally, let us consider the case $1 \leq R \leq 3^{n}$.
Due to this assumption on $R$, there is $t\in \{0,\dots ,n-1\}$ such that
\[
  3^{t}
  \leq R
  \leq 3^{t+1}
  \quad \text{and hence} \quad
  \frac{\ln R}{\ln 3}-1
  \leq t
  \leq \frac{\ln R}{\ln 3}.
\]
Using the representation \eqref{eq:RealLineCantorSet} of the Cantor set
$C_{n}(R;\R)$ and $\Omega ^{(n)}\setminus \{0\} = \biguplus_{m=1}^{n}\Omega _{m}^{(n)}$, we see
\[
  \lambda_{k}^{(\alpha)}(C_{n}(R;\D))
  = \int_{0}^{R/3^{n}}
      g_{k}(y;\alpha )
    \, d y
    + \sum_{m=1}^{n}\,
        \sum_{a\in \Omega _{m}^{(n)}}\,
          \int_{R \cdot b_{a}}^{R\cdot (b_{a}+3^{-n})}
            g_{k}(y;\alpha)
          \, d y.
\]
First, note as a consequence of $g_{k}(x;\alpha) \leq C\cdot (1+x)^{-1}\leq C$ that
$\int_{0}^{R/3^{n}}g_{k}(y;\alpha )\,dy\leq C\frac{R}{3^{n}}$.
Next, note that if $a\in \Omega _{m}^{(n)}$, then \eqref{eq:LowerCantorBoundAsymptotic}
implies $b_{a} \geq 2/3^{m} \geq 3^{-m}$ and $R \cdot b_{a}\geq 3^{t-m}$.
This implies for $y\in [R \cdot b_{a}, R \cdot (b_{a}+3^{-n})]$ that
\[
  (1+y)^{-1}
  \leq (1+3^{t-m})^{-1}
  \leq 3^{-(t-m)_{+}}
  \quad \text{where}\quad
  x_{+} := \max \{0, x\},
\]
and hence
\[
  \int_{R\cdot b_{a}}^{R\cdot (b_{a}+3^{-n})}
    g_{k}(y;\alpha )
  \, d y
  \leq C \int_{R\cdot b_{a}}^{R\cdot (b_{a}+3^{-n})}
           (1+y)^{-1}
         \, d y
  \leq C \cdot \frac{R}{3^{n}} \cdot 3^{-(t-m)_{+}}.
\]
Recalling, from \eqref{eq:SpecialIndexSetCardinality} that $\# \Omega_{m}^{(n)} = 2^{n-m}$,
we conclude
\[
  \sum_{m=1}^{n}
    \sum_{a\in \Omega _{m}^{(n)}}
      \int_{R\cdot b_{a}}^{R\cdot (b_{a}+3^{-n})}
        g_{k}(y;\alpha)
      \, d y
  \leq C \cdot (2/3)^{n} \cdot R \cdot 2^{-t}
       \sum_{m=1}^{n}
         2^{t-m}3^{-(t-m)_{+}}
  .
\]
Introducing the new summation index $\ell = t - m$, we see that
\[
  \sum_{m=1}^{n}
    2^{t-m}
    3^{-(t-m)_{+}}
  \leq \sum_{\ell \in \mathbb{Z}}
         2^{\ell}
         3^{-\ell_{+}}
  =    \sum_{\ell = -\infty}^{-1}
         2^{\ell}
       + \sum_{\ell = 0}^{\infty}
           (2/3)^{\ell}
  =    1 + \frac{1}{1-\frac{2}{3}}
  =    4.
\]
Furthermore, since $t \geq \frac{\ln R}{\ln 3} - 1$,
we have $2^{-t} \leq 2 \cdot 2^{-\ln R/\ln 3} = 2 \cdot R^{-\ln 2/\ln 3}$.
By combining the estimates that we collected, we finally conclude that
\[
  \lambda_{k}^{(\alpha)}\left( C_{n}(R;\D)\right)
  \leq C \frac{R}{3^{n}}
       + 8 C \cdot (2/3)^{n} \cdot R^{1-\frac{\ln 2}{\ln 3}}.
\]
It remains to observe that $R/3^{n} \leq (2/3)^{n} \cdot R^{1-\frac{\ln 2}{\ln 3}}$,
which easily follows from the condition $R\leq 3^{n}$.
\end{proof}

\subsection{Lower bounding the first eigenvalue}
\label{sub:CantorLocalizationLowerBounds}

In this subsection we prove that the eigenvalue
$\lambda_{0}^{(\alpha)}(C_{n}(R;\D))$ of $P_{C_{n}(R;\mathbb{D})}^{(\alpha)}$ fulfills a
lower bound which matches the upper bound from Proposition~\ref{prop:EigenvaluesUpperBound}.

\begin{proposition}\label{prop:EigenvaluesLowerBound}
For each $\alpha \in (0,\infty)$, there is a constant $C = C(\alpha) > 0$ satisfying
\[
  \lambda _{0}^{(\alpha )}(C_{n}(R;\mathbb{D}))
  \geq C
       \begin{cases}
         (2/3)^{n} \cdot R,                         & \text{if }0 < R \leq 1, \\
         (2/3)^{n} \cdot R^{1-\frac{\ln 2}{\ln 3}}, & \text{if }1 \leq R \leq 3^{n}, \\
         1,                                         & \text{if }R \geq 3^{n}
       \end{cases}
\]
for all $n\in \mathbb{N}$ and $R\in (0,\infty)$.
\end{proposition}

\begin{proof}
In case of $R\geq 3^{n}$, we have $C_{n}(R;\R) \supset [0,R/3^{n}] \supset [0,1]$, and hence
\[
  \lambda_{0}^{(\alpha)}(C_{n}(R;\D))
  \geq \int_{0}^{1}
         g_{0}(y;\alpha)
       \, d y
  =:   C_{1}(\alpha)
  >    0,
\]
since $g_{0}(\cdot \,;\alpha)$ is a positive continuous function.
Likewise, if $0 < R \leq 1$, then $C_{n}(R;\R) \subset [0,1]$.
Since the continuous, positive function $g_{0}(\cdot \,;\alpha)$ is lower bounded on
the compact set $[0,1]$ (say, $g_{0}(x;\alpha) \geq C_{2}$ for $x \in [0,1]$
with $C_{2} = C_{2}(\alpha) > 0$), we thus see
\[
  \lambda_{0}^{(\alpha)}(C_{n}(R;\D))
  = \int_{C_{n}(R;\R)}
      g_{0}(y;\alpha)
    \, d y
  \geq C_{2}
       \int_{C_{n}(R;\mathbb{R})}
         1
       \, d y
  =    C_{2} \cdot (2/3)^{n} \cdot R,
\]
proving the desired bound for the case $0 < R \leq 1$.
Finally, consider the case $1 \leq R \leq 3^{n}$.
For brevity, define $C_{3} := [B(1,2\alpha)]^{-1}$.
Choose $t\in \{0,\dots ,n-1\}$ such that $3^{t} \leq R \leq 3^{t+1}$,
whence $\frac{\ln R}{\ln 3} - 1 \leq t \leq \frac{\ln R}{\ln 3}$.
Using the representation \eqref{eq:RealLineCantorSet} of the Cantor set, and observing
that $\Omega_{t+1}^{(n)} \subset \Omega^{(n)}$ since $t+1\in \{1,\dots ,n\}$, it follows that
\[
  \lambda_{0}^{(\alpha)}(C_{n}(R;\D))
  \geq C_{3}
       \sum_{a\in \Omega_{t+1}^{(n)}}
         \int_{R\cdot b_{a}}^{R\cdot (b_{a}+3^{-n})}
           (1+y)^{-(1+2\alpha)}
         \, d y
  .
\]
Now, note that if $a\in \Omega _{t+1}^{(n)}$ and $y\leq R\cdot (b_{a}+3^{-n})$,
then Equation~\eqref{eq:LowerCantorBoundAsymptotic} shows that
\[
  y
  \leq R \cdot (b_{a}+3^{-n})
  \leq R \cdot (3\cdot 3^{-(t+1)}+3^{-n})
  \leq 4 R \cdot 3^{-(t+1)}
  \leq 4,
\]
whence $(1+y)^{-(1+2\alpha )} \!\geq\! 5^{-(1+2\alpha )} =: C_{4}$.
Next, recall from Equation~\eqref{eq:SpecialIndexSetCardinality}
that $\# \Omega_{t+1}^{(n)} = 2^{n-t-1}$.
Overall, we thus see as desired that
\begin{align*}
  \lambda_{0}^{(\alpha)}(C_{n}(R;\D))
  & \geq C_{3} C_{4} \cdot 2^{n-t-1} \frac{R}{3^{n}}
  = \frac{C_{3}C_{4}}{2}
    \cdot (2/3)^{n}
    \cdot R
    \cdot 2^{-t} \\
  & \geq \frac{C_{3}C_{4}}{2}
         \cdot (2/3)^{n}
         \cdot 2^{-\frac{\ln R}{\ln 3}}
         \cdot R
    =    \frac{C_{3}C_{4}}{2}
         \cdot (2/3)^{n}
         \cdot R^{1-\frac{\ln 2}{\ln 3}}
  .
  \qedhere
\end{align*}
\end{proof}

\subsection{Proof of Theorem \ref{thm:CantorDiskLocalizationOperatorNorm}}

Since $P_{C_n(R;\D)}^{(\alpha)} : L^2(d A_\alpha) \to L^2(d A_\alpha)$
is self-adjoint, with $P_{C_n(R;\D)}^{(\alpha)} = 0$ on the orthogonal complement
of $\mathrm{span} \{ e_k^{\alpha} \colon k \in \N_0 \}$, we see by combining
Propositions~\ref{prop:CantorEigenvaluesExplicit},
\ref{prop:EigenvaluesUpperBound} and \ref{prop:EigenvaluesLowerBound} that
\begin{equation}
  C_{1}
  \begin{cases}
    (\frac{2}{3})^{n} \cdot R,                         & \text{if }0 < R \leq 1 \\
    (\frac{2}{3})^{n} \cdot R^{1-\frac{\ln 2}{\ln 3}}, & \text{if }1 \leq R \leq 3^{n} \\
    1,                                                 & \text{if }R \geq 3^{n}
  \end{cases}
  \leq \left\Vert P_{C_{n}(R;\mathbb{D})}^{(\alpha )}\right\Vert_{op}
  \leq C_{2}
       \begin{cases}
         (\frac{2}{3})^{n} \cdot R,                         & \text{if } 0 < R \leq 1, \\
         (\frac{2}{3})^{n} \cdot R^{1-\frac{\ln 2}{\ln 3}}, & \text{if } 1 \leq R \leq 3^{n}, \\
         1,                                                 & \text{if } R \geq 3^{n}.
       \end{cases}
  \label{bound1}
\end{equation}
The condition $3^{n}\asymp R^{2}$ implies (for $R \to \infty$) that $1 \ll R \asymp 3^{n/2} \ll 3^n$.
The same condition also implies that $R^{\frac{\ln 2}{\ln 3}}\asymp 2^{n/2}$,
and thus $R^{2 \frac{\ln 2}{\ln 3}} \asymp 2^{n}$.
This implies for $R \to \infty$ that
\[
  (2/3)^{n} \cdot R^{1-\frac{\ln 2}{\ln 3}}
  \asymp (2/3)^{n/2}
  \asymp R^{-1+\frac{\ln 2}{\ln 3}}
\]
and the result follows from \eqref{bound1}.
\hfill$\square$

\section{Proof of Lemma~\ref{lem:DensityFunctionUpperBound}}
\label{sec:DensityFunctionUpperBound}

Recall that the goal is to prove that there exists a constant $C = C(\alpha) > 0$ satisfying
\[
  g_k(x; \alpha) \leq C \cdot (1 + x)^{-1}
  \qquad \forall \, x \in (0,\infty) \text{ and } k \in \N_0 ,
\]
where the function
\[
  g_k (x; \alpha)
  = [B(k+1,2\alpha)]^{-1}
    \cdot \left( \frac{x}{1+x}\right)^{k}
    \cdot (1+x)^{-(1+2\alpha)}
\]
was defined in Equation~\eqref{eq:GkDefinition}.

The proof will be given in three steps.
All implied constants will either be absolute constants
or constants that only depend on $\alpha$.

\medskip{}

\textbf{Step 1} \emph{(Estimating the Beta function):}
By definition, the Beta function can be written in terms of the Gamma function as
$B(x,y) = \frac{\Gamma (x)\,\Gamma (y)}{\Gamma (x+y)}$ for $x,y > 0$.
Furthermore, a precise form of Stirling's formula (see \cite{JamesonStirlingFormula}) shows
that the Gamma function satisfies
\[
  \Gamma (x)
  = \sqrt{2\pi /x} \cdot (x/e)^{x} \cdot e^{\mu (x)}
  \quad \text{for} \quad
  x > 0
  \quad \text{where} \quad
  0\leq \mu (x)\leq (12\,x)^{-1}
\]
and therefore
\[
  \Gamma(x)
  \asymp x^{-1/2} \cdot (x/e)^x
  \qquad \forall \, x \geq 1 ,
\]
which is the only case that we will need.
Indeed, using this estimate, we see for $k \geq 1$ that
\begin{align*}
  [B(k+1,2\alpha)]^{-1}
  & =\frac{\Gamma (k+1+2\alpha )}{\Gamma (2\alpha)\,\Gamma (1+k)} \\
  & \asymp \frac{
             \sqrt{1/(k+1+2\alpha)}
             \cdot \left( \frac{k+1+2\alpha }{e} \right)^{k+1+2\alpha }
           }{
             \Gamma (2\alpha)
             \cdot \sqrt{1/(k+1)}
             \cdot \left( \frac{k+1}{e}\right)^{k+1}
           } \\
  & \lesssim \sqrt{\frac{k+1}{k+1+2\alpha }}
             \cdot \left( \frac{k+1+2\alpha }{e} \right)^{2\alpha}
             \cdot \left( 1+\frac{2\alpha }{k+1} \right)^{k+1} \\
  & \overset{(\ast)}{\lesssim} \left( \frac{k+1+2\alpha }{e} \right)^{2\alpha}
    \lesssim k^{2 \alpha} \quad \text{since } k \geq 1
  .
\end{align*}
Here, the step marked with $(\ast)$ used the well-known fact that
$(1 + \frac{x}{k})^k \xrightarrow[k\to\infty]{} e^x$
for all $x \in \R$, and thus $(1 + \frac{2\alpha}{k+1})^{k+1} \lesssim 1$,
with the implied constant only depending on $\alpha$.

We remark that the estimate for the Beta function that we derived in this step is probably well-known.
We nevertheless decided to give the relatively easy proof since
we could not locate a handy reference.

\medskip {}

\textbf{Step 2} \emph{(Estimating $k^{2\alpha }\cdot y^{k}$):}
Let $y \in (0,1)$ be fixed, and define
\[
  f : \quad
  [0,\infty) \to [0, \infty ), \quad
  t \mapsto t^{2\alpha} \cdot y^{t}
  .
\]
Then $f$ is differentiable on $(0, \infty)$ with derivative
\[
  f^{\prime }(t)
  = 2\alpha \, t^{2\alpha - 1} y^{t}
    + t^{2\alpha } \, \ln (y) \, y^{t}
  = t^{2\alpha -1} \cdot y^{t}(2\alpha +t\ln (y))
  .
\]
From this, it follows that if we define
$t_{0} := -\frac{2\alpha }{\ln y}= \frac{2\alpha }{\ln (1/y)}\in (0,\infty )$,
then $f^{\prime }(t) > 0$ for $t \in (0,t_{0})$ and $f^{\prime }(t) < 0$
for $t\in (t_{0}, \infty)$.
Hence, $f$ has a global maximum in $t = t_{0}$, showing that
\[
  t^{2\alpha }y^{t}
  = f(t)
  \leq f(t_{0})
  =    \left( \frac{2\alpha }{\ln (1/y)}\right)^{2\alpha}
       y^{-\frac{2\alpha }{\ln (y)}}
  =    \left( \frac{2\alpha }{\ln (1/y)} \right)^{2\alpha}
       e^{-2\alpha}
  \lesssim [\ln(1/y)]^{-2\alpha}
\]
for all $t \in [0,\infty)$ and $y \in (0,1)$.

\medskip{}

\textbf{Step 3} \emph{(Completing the proof):}
If we apply the estimate from the preceding step for $t = k$ and $y = \frac{x}{1+x} \in (0,1)$
(where $x \in (0,\infty )$), then we see that
\[
  k^{2\alpha} \cdot \left( \frac{x}{1+x}\right)^{k}
  \lesssim \left[ \ln \frac{1+x}{x} \right]^{-2\alpha}
  =        [\ln (1+x^{-1})]^{-2\alpha}
  \quad \forall \,x \in (0,\infty).
\]
Now, for $x \geq 1$ note that $1 + x^{-1} \leq 2$, and hence
\[
  \ln (1+x^{-1})
  = \int_{1}^{1+x^{-1}}
      t^{-1}
    \, d t
  \geq \frac{1}{2}\,x^{-1}
\]
from which we see---because of $\alpha >0$---that
\[
  [\ln (1+x^{-1})]^{-2\alpha}
  \leq 2^{2\alpha} \cdot x^{2\alpha}
  \leq 2^{2\alpha} \cdot (1+x)^{2\alpha }.
\]
Likewise, if $0 < x \leq 1$ then $\ln (1+x^{-1}) \geq \ln (2)$ and hence
\[
  [\ln (1+x^{-1})]^{-2\alpha }
  \leq [ \ln (2) ]^{-2\alpha}
  \leq [ \ln (2)]^{-2\alpha} \cdot (1+x)^{2\alpha}
  .
\]
All in all, we have thus shown that
\[
  k^{2\alpha} \cdot \left( \frac{x}{1+x}\right)^{k}
  \lesssim (1+x)^{2\alpha}
  \qquad \forall \,x \in [0,\infty) \text{ and } k \in \mathbb{N}_{0}.
\]
Combining this with the estimate from Step~1, we see for $k \geq 1$ that
\begin{align*}
  g_{k}(x;\alpha )
  & = [B(k+1,2\alpha)]^{-1}
      \cdot \left( \frac{x}{1+x}\right)^{k}
      \cdot (1+x)^{-(1+2\alpha)} \\
  & \lesssim k^{2\alpha}
             \cdot \left( \frac{x}{1+x}\right)^{k}
             \cdot (1+x)^{-2\alpha }
             \, (1+x)^{-1}
     \lesssim (1+x)^{-1}.
\end{align*}
Finally, in case of $k = 0$, we see directly from the definition of the Beta function that
\[
  B(k+1,2\alpha)
  = B(1,2\alpha)
  = \int_{0}^{1}
      (1-t)^{2\alpha-1}
    \, d t
  = \int_{0}^{1}
      s^{2\alpha -1}
    \, d s
  = \frac{1}{2\alpha}
  ,
\]
and hence
\[
  g_{0}(x;\alpha )
  = 2\alpha \cdot (1+x)^{-(1+2\alpha )}
  \leq 2\alpha \cdot (1+x)^{-1},
\]
since $1+x \geq 1$ and $1+2\alpha \geq 1$.
\hfill$\square$

\begin{acknowledgement*}
We thank Helge Knutsen for several comments and corrections on an earlier
version of the manuscript.
\end{acknowledgement*}

\end{document}